\def\BibTeX{{\rm B\kern-.05em{\sc i\kern-.025em b}\kern-.08em
    T\kern-.1667em\lower.7ex\hbox{E}\kern-.125emX}}
\DeclareMathOperator{\range}{range}
\DeclareMathOperator{\trace}{Tr}
\newtheorem{lemma}{Lemma}
\newcommand \mrg{\mathrm{g}}
\newcommand \mcH{\mathcal{H}}
\newcommand \mcR{\mathcal{R}}
\begin{document}

\title{Optimal Singular Perturbation-based Model Reduction for Heterogeneous Power Systems\\
}

\author{\IEEEauthorblockN{Yue Huang, Dixant B. Sapkota, and Manish K. Singh}
\IEEEauthorblockA{\textit{Department of Electrical \& Computer Engineering, University of Wisconsin--Madison} \\
Emails: \{yhuang675, dbsapkota, manish.singh\}@wisc.edu}
}

\maketitle

\begin{abstract} 
Power systems are globally experiencing an unprecedented growth in size and complexity due to the advent of nonconventional generation and consumption technologies. To navigate computational complexity, power system dynamic models are often reduced using techniques based on singular perturbation. However, several technical assumptions enabling traditional approaches are being challenged due to the heterogeneous, and often black-box, nature of modern power system component models. This work proposes two singular perturbation approaches that aim to optimally identify \emph{fast} states that shall be reduced, without prior knowledge about the physical meaning of system states. After presenting a timescale-agnostic formulation for singular perturbation, the first approach uses greedy optimization to sequentially select states to be reduced. The second approach relies on a nonlinear optimization routine allowing state transformations while obtaining an optimally reduced model. Numerical studies on a test system featuring synchronous machines, inverters, and line dynamics demonstrate the generalizability and accuracy of the developed approaches.         

\end{abstract}

\begin{IEEEkeywords}
greedy optimization, interconnected power systems, model reduction, singular perturbation. 
\end{IEEEkeywords}

\section{Introduction}
The growing size and complexity of power systems are being mirrored by the rise of increasingly complex underlying mathematical models. To conduct system-level analyses, component dynamic models are often sourced from the distinct stakeholders and integrated into simulation environments. Model reduction techniques are frequently employed to reduce the computational costs of various analysis and design applications. Model reduction research in power systems began with the pursuit of dynamic equivalents of neighboring areas when simulating a study area, primarily applying coherency-based methods~\cite{chow1982timebook}. Soon followed the rigorous developments in singular perturbation (SP) approaches~\cite{KOKOTOVIC76,KOKOTOVIC1980}. With the rise in digital computing,  Krylov subspace and balanced truncation methods gained popularity~\cite{chow2013power,PaiKrylov}. Although armed with stronger guarantees on model-reduction error, these methods typically do not preserve system structure. Therefore, coherency and SP-based approaches are preferred in industry~\cite{Aranya2011measurementdynequiv}, where one seeks model interpretability, while balanced truncation is reserved for computational applications~\cite{Tari24GM}.

Singular perturbation approaches exploit the timescale separation in the dynamics of diverse physical phenomena. In such setups, the system states are first partitioned into slow and fast states. The fast states are assumed to reach quasi-steady state quickly and are eliminated, leaving a reduced-order model that accurately captures the dominant slow dynamics~\cite{KOKOTOVIC76,KOKOTOVIC1980}. The comparison of timescales within a component, say a particular generator or load, has been relatively easier than comparing across components. Therefore, SP applications in power systems typically involve the reduction of individual component models followed by integrating these to obtain a system-level model~\cite{Olaolu23TPWRSmodelreduction,saber22TCNSsingular}.

Modern grids are challenging the prerequisites for effective application of SP-- discernible and pronounced timescale separation~\cite{dominic23annualreview}. In traditional power grids, identifying and separating phenomena at different timescales was facilitated by the homogeneous, well-understood physics of a few generator and load types. However, component-level model heterogeneity has been increasing due to demand-side agility, inverter-based resources, and large loads such as data centers. Timescales for new technologies are not easily discernible as they are determined by programmable parameters, rather than physical phenomena. Furthermore, data-driven models are not amenable to physical intuition-based simplifications.

To enable the continued use of SP-based power system model reduction in modern settings, this work proposes two approaches to identify the \emph{fast} states to be reduced without relying on prior physical insights into the system timescales. To facilitate our formulations, we succinctly present a general SP formulation for linear time-invariant (LTI) systems that yields a reduced model given the choice of fast and slow states. Thereafter, we develop a greedy optimization-based approach that sequentially selects the states to be reduced. To complement the speed and accuracy aspects of the greedy approach, we formulate a nonlinear optimization routine that allows state transformations while obtaining a reduced model. We numerically evaluate the performance of the developed approaches on a test system that features dynamic models for synchronous machines, grid-following inverters, and transmission lines with heterogeneous model parameters. Our tests demonstrate the generalizability of the presented approach and provide accuracy benchmarks for future developments.  

\section{Problem Overview}
In this work, we focus on the small-signal LTI modeling of interconnected power systems hosting diverse resources. An example instance, shown in Fig.~\ref{fig:testnetwork}, includes synchronous generators (SG) and grid-following (GFL) inverter-based generations, passive constant-current loads, RL line dynamics, and shunt capacitances. While the model highlights are provided in Section~\ref{sec:tests}, the detailed model is hosted at\cite{Sapkota-SM-GFL-Model_2025}. Consider the baseline model for such a system represented as  
\begin{equation}
    \Sigma :\begin{cases}
        \dot{x} = Ax+Bu\\
        y = Cx,
    \end{cases}
    \label{eq:lti_system}
\end{equation}
where matrix $A\in \mathbb{R}^{n\times n}$ is assumed to be Hurwitz, implying small-signal stability of the baseline model. The input and output matrices are denoted as $B\in \mathbb{R}^{n\times m}$ and $C\in \mathbb{R}^{p\times n}$, respectively. The model does not have a direct feedthrough from inputs $u$ to outputs $y$.

The main objective of this work is to obtain an SP-based reduced model of a desired order $r<n$
\begin{equation}
    \hat{\Sigma}:\begin{cases}
        \dot{\hat{z}} &= \hat{A}\hat{z}+\hat{B}u\\
        \hat{y} &= \hat{C}\hat{z} +\hat{D}u,
    \end{cases}
    \label{eq:lti_system_sp}
\end{equation}
such that the input-output mapping of~\eqref{eq:lti_system_sp} closely approximates that of~\eqref{eq:lti_system}. 

\begin{figure}[t]
    \centering
    \includegraphics[width=1\linewidth]{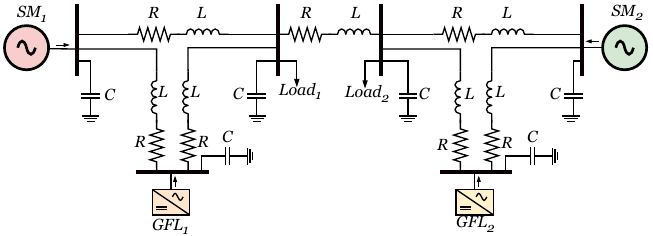}
    \caption{Test network with synchronous machine and grid-following inverter-based generation, RLC line models, and passive constant current loads~\cite{Sapkota-SM-GFL-Model_2025}.}
    \label{fig:testnetwork}
\end{figure}
\section{Model Reduction Setup}

\subsection{Singular Perturbation Preliminaries}\label{sec:SP}
To cohesively present our approaches for optimally selecting the states to be reduced via SP, we build on the formalism of~\cite{structured_model_reduction_takayuki}. Without loss of generality, consider a state transformation $z=Ux$, where $U\in\mathbb{R}^{n\times n}$ is a unitary matrix partitioned as $U=[P^\top~Q^\top]^\top$. Accordingly, partition the transformed states as $z=[z_p^\top~z_q^\top]^\top$ such that $z_p=Px\in\mathbb{R}^r$ denotes the states to be retained after model reduction. Notice that the prevalent setting where the slow and fast states are identified without state transformations can be captured here by setting $(P,~Q)$ to be selection matrices. For $U$ to be a unitary matrix, the following must hold
\begin{equation}\label{eq:PQ_cons}
    PP^\top = I_{r}, ~~QQ^\top = I_{n-r}, ~~PQ^\top = 0.
\end{equation} 
Applying the transformation $x = U^\top z$, LTI~\eqref{eq:lti_system} can be equivalently written as
\begin{subequations}\label{eq:lti_system_hat}
\begin{align}
        \begin{bmatrix}
            \dot{z_p}\\ \dot{z_q}
        \end{bmatrix} &= 
        \begin{bmatrix}
            PAP^\top ~~PAQ^\top\\
            QAP^\top ~~QAQ^\top
        \end{bmatrix} 
        \begin{bmatrix}
            z_p\\z_q
        \end{bmatrix} +
        \begin{bmatrix}
            PB\\QB
        \end{bmatrix}u\label{seq:LTIa}
        \\
        y &= \begin{bmatrix}
            CP^\top ~~CQ^\top
        \end{bmatrix}
        \begin{bmatrix}
            z_p\\z_q
        \end{bmatrix}
\end{align}
\end{subequations}
When matrices $(P, Q)$ are selected to yield $z_p$ and $z_q$ as slow and fast states, the SP approximation can be obtained by setting $\dot{z}_q=0$ in the second row equation of~\eqref{seq:LTIa} to obtain the algebraic approximation
\begin{equation}\label{eq:z_q_approx}
    \hat{z}_q = -(QAQ^\top)^{-1}QAP^\top\hat{z}_p-(QAQ^\top)^{-1}QBu,
\end{equation}
where $\hat{z}_q$ is the quasi-steady-state approximation of $z_q$ and $\hat{z}_p$ is the SP-approximated state corresponding to $z_p$. The overall SP-based reduced model is then obtained by substituting~\eqref{eq:z_q_approx} back in~\eqref{eq:lti_system_hat}. To map the reduced model to~\eqref{eq:lti_system_sp}, we relabel $\hat{z}_p$ to $\hat{z}$ and obtain the state-space matrices as
\begin{equation}\label{eq:hat_matrices}
    \begin{aligned}
        \hat{A} &:= PAP^\top-PA\Pi AP^\top &&\in \mathbb{R}^{r\times r}\\
        \hat{B} &:= (P-PA\Pi)B  &&\in\mathbb{R}^{r\times m}\\
        \hat{C} &:=C(P^\top-\Pi AP^\top) &&\in \mathbb{R}^{p\times r}\\
        \hat{D}&:= - C\Pi B &&\in\mathbb{R}^{p \times m}\\
    \end{aligned}
\end{equation}
where
\begin{equation}\label{eq:pi_equation}
\begin{aligned}
         \Pi &:=Q^\top(QAQ^\top)^{-1}Q &&&\in \mathbb{R}^{n\times n}.
\end{aligned}
\end{equation}

\subsection{Desirable reduced model properties}\label{sec:desirable}
Having stated the general SP formulation, the goal of this work boils down to optimally selecting matrices $(P, Q)$ that render~\eqref{eq:lti_system_sp} as a good order-$r$ approximation of the original dynamics~\eqref{eq:lti_system} while ensuring certain desirable properties. In particular, to ensure stability of~\eqref{eq:lti_system_sp}, we desire $\hat{A}$ to be Hurwitz. Further, to avoid a direct feedthrough, we will pursue $(P, Q)$ that result in $\hat{D}=0$. Note from~\eqref{eq:hat_matrices} that $\hat{D}=0$ if $C\Pi=0$ or $\Pi B=0$. Since $PQ^\top=0$ from~\eqref{eq:PQ_cons}, one obtains $C\Pi=0$ if $\range(C^\top)\subseteq\range(P^\top)$. Similarly, $\Pi B=0$ if $\range(B)\subseteq\range(P^\top)$. Thus, ensuring $\hat{D}=0$ enforces $\range(P^\top)$ to contain either $\range(C^\top)$ or $\range(B)$, or both. Since for our test setup, the number of outputs $p$ is less than inputs $m$, we choose the dimensionally less restrictive constraint
\begin{equation}\label{eq:CPconstraint}
    \range(C^\top)\subseteq\range(P^\top).
\end{equation}

Enforcing the above constraints further allows us to use the prevalent metric of $\mcH_2$ norm to quantify the model-reduction accuracy. Specifically, the dynamics of output error between~\eqref{eq:lti_system} and \eqref{eq:lti_system_sp} is given by
\begin{equation}\label{eq:errordyn}
    \Sigma_{\mathrm{err}}:\begin{cases}
        \begin{bmatrix}
    \dot{x}\\
    \dot{\hat{z}}
\end{bmatrix} &=
\underbrace{
\begin{bmatrix}
A & 0 \\[2pt]
0 & \hat{A}
\end{bmatrix}
}_{=\bar{A}}
\begin{bmatrix}
    x\\
    \hat{z}
\end{bmatrix}
+
\underbrace{
\begin{bmatrix}
B \\[2pt]
\hat{B}
\end{bmatrix}
}_{=\bar{B}}
u,\\
        \delta &= \underbrace{\begin{bmatrix}
            C ~~-\hat{C}
        \end{bmatrix}}_{=\bar{C}}
        \begin{bmatrix}
        x\\
        \hat{z}
        \end{bmatrix},
    \end{cases}
\end{equation}
where $\delta=y-\hat{y}$ is the output error. Note that the error dynamics in~\eqref{eq:errordyn} is exponentially stable since the block diagonal matrix $\bar{A}$ is composed of $A$ that is Hurwitz by assumption, and $\hat{A}$, which will be designed to be Hurwitz. Further, the absence of a feedthrough path in $\Sigma_{\mathrm{err}}$ results in a well-defined $\|\Sigma_{\mathrm{err}}\|_{\mcH_2}$ that can be computed as
\begin{equation}\label{eq:H2def}
\|\Sigma_{\mathrm{err}}\|_{\mcH_2}=\trace(\bar{B}^\top\Phi\bar{B}),
\end{equation}
where matrix $\Phi$ is the observability gramian of $\Sigma_{\mathrm{err}}$ that uniquely solves
\begin{equation}\label{eq:lyapunov_observability}
    \bar{A}^\top\Phi~+\Phi\bar{A} ~+\bar{C}^\top\bar{C} = 0.
\end{equation}
The way norm $\|\Sigma_{\mathrm{err}}\|_{\mcH_2}$ relates to $\|\delta(t)\|_2^2$ is amenable to various interpretations~\cite{gayme}:
	\begin{enumerate}
		\item[\emph{i1)}] The expected steady-state error between the outputs of~\eqref{eq:lti_system} and~\eqref{eq:lti_system_sp} for unit-variance white noise input $u(t)$:
			\begin{equation*} 
		\|\Sigma_{\mathrm{err}}\|_{\mcH_2}=\lim_{t\rightarrow \infty}\mathbb{E}\left[\|\delta(t)\|_2^2\right]
		\end{equation*}
		\item[\emph{i2)}] The sum of time-integrals of error $\delta(t)$ for unit-impulse disturbances applied separately for for the $m$ inputs:
			\begin{equation*}
		\|\Sigma_{\mathrm{err}}\|_{\mcH_2}=\sum_{i=0}^{m}\int_{0}^{\infty} \|\delta^i(t)\|_2^2 \mathrm{d}t,
		\end{equation*}
        where $\delta^i(t)$ denotes the output of $\Sigma_{\mathrm{err}}$ when an impulse is applied at the $i^{th}$ input.
	\end{enumerate}
Since $\|\Sigma_{\mathrm{err}}\|_{\mcH_2}$ captures the model-reduction error across broad input conditions, we use it as our performance metric to optimize. 
\section{Optimal SP-based Model Reduction}\label{sec:OptSP}
Given the full-order model~\eqref{eq:lti_system}, the reduced model~\eqref{eq:lti_system_sp} is uniquely determined by the matrices $(P, Q)$; cf.~\eqref{eq:hat_matrices}. For the prevalent SP setting where slow and fast states are selected from the original states $x$ without state transformation, matrices $(P, Q)$ are selection matrices built by partitioning the rows of an identity matrix. For this special case, the reduced model matrices~\eqref{eq:hat_matrices} are uniquely determined by either $P$ or $Q$ since $[P^\top Q^\top]^\top$ is a row permutation of the identity matrix. We next present a greedy approach to optimally select the states to be reduced under this special case.

\subsection{Greedy optimization-based singular perturbation}\label{sec:greedy}
We build the greedy approach by sequentially determining which states to reduce, i.e., selecting $n-r$ distinct canonical vectors to include in $Q$, i.e., $Q^\top=[\{e_i\}_{i\in\mcR}]$, where $e_i$ denotes the $i^{\text{th}}$ canonical vector, and  $\mcR$ denotes the set of reduced states with $|\mcR|=n-r$. In doing so, we need to account for the two desirable properties discussed in Section~\ref{sec:desirable}: stability of $\hat{A}$, and obtaining $\hat{D}=0$ by enforcing~\eqref{eq:CPconstraint}. The stability requirement can be verified by analyzing the spectrum of $\hat{A}$ at each greedy optimization step. On the other hand, aiming for $\hat{D}=0$ restricts the choice of canonical vectors that can be included in $Q$. Specifically, from~\eqref{eq:PQ_cons}, it follows that
\begin{equation}\label{eq:QC}
    \range(C^\top)\subseteq\range(P^\top)\implies QC^\top=0.
\end{equation}
Therefore, the sparsity of $C$ dictates the set of states, $\bar{\mcR}$, that can be reduced while satisfying~\eqref{eq:QC}. Particularly, $\bar{\mcR}=\{j|C^\top e_j=0\}.$ The greedy algorithm proceeds by augmenting matrix $Q$ at each step by selecting the row from $\bar{\mcR}$ that results in minimum $\|\Sigma_{\mathrm{err}}\|_{\mcH_2}$ and terminates after $n-r$ steps. An early termination can be caused due to exhaustion of the set of candidate states to be reduced $\bar{\mcR}$, or instability being caused with all available reduction options. Algorithm \ref{greedy_algorithm} summarizes the above approach.

\begin{algorithm}
    \begin{algorithmic}[hbt!] 
    \caption{Greedy optimization-based singular perturbation}\label{greedy_algorithm}
        \Require Full-order model $\Sigma$ and reduced model order $r$
        \State Initialize candidate set $\bar{\mcR}\gets \{j|C^\top e_j=0\}.$
        \State Initialize $Q_{\mathrm{opt}}\gets [~]$, stability indicator $\sigma=1$ .
        \While{$\mathrm{rows}(Q_{\mathrm{opt}})<n-r$ \& $\sigma=1$ \& $\bar{\mcR}\neq\phi$}
        \State Set $err\gets \infty$, $S\gets 0$, $\sigma\gets 0$.
        \For{$j\in\bar{\mcR}$}
        \State Augment $Q^\top\gets [Q_{\mathrm{opt}}^\top~~e_j]$.
        \State Find $P$ by stacking canonical vectors not in $Q$.
        \State Evaluate $(\hat{A}, \hat{B}, \hat{C})$ using~\eqref{eq:hat_matrices}.
        \If{$\hat{A}$ is Hurwitz}
        $\sigma\gets 1$.
        \State Evaluate $\|\Sigma_{\mathrm{err}}\|_{\mcH_2}$ using~\eqref{eq:errordyn}--\eqref{eq:lyapunov_observability}.
        \If{$\|\Sigma_{\mathrm{err}}\|_{\mcH_2}<err$}
        $err\gets \|\Sigma_{\mathrm{err}}\|_{\mcH_2}$, \State $S\gets j$.
        \EndIf
        \EndIf
        \EndFor
        \If{$\sigma=1$} $Q_{\mathrm{opt}}^\top\gets [Q^\top~~e_S]$, $\bar{\mcR}\gets \bar{\mcR}\setminus S$. 
        \EndIf
        \EndWhile
    \end{algorithmic}
\end{algorithm}

\subsection{Nonlinear optimization-based singular perturbation}\label{sec:nonlinear}
In this section, we relax the requirement for $(P, Q)$ to be selection matrices and pursue optimal model reduction while allowing state transformations. It is conceivable that such relaxation could improve the accuracy of reduced models. Moreover, relaxing the combinatorial constraints allows us to leverage nonlinear programming solvers to obtain a (locally) optimal SP-based reduction. However, we acknowledge that state transformations may diminish the physical interpretability of states and lead to a loss of structure. Therefore, we envisage that such approaches will be useful for reducing data-driven or black-box baseline models. The presented approach will also enable future research on preserving structure in parts of a power system interconnection while using state transformations in the remainder to improve accuracy.

When $[P^\top Q^\top]$ is allowed to be any unitary matrix, it is worth noting that the reduced model matrices $(\hat{A}, \hat{B}, \hat{C})$ depend solely on the choice of matrix $P$~\cite{structured_model_reduction_takayuki}. This is because matrix $Q$ features in~\eqref{eq:hat_matrices} via matrix $\Pi$, which is invariant to the basis selection of $Q$. Unfortunately, even though matrix $A$ is Hurwitz, the conditions on $P$~\eqref{eq:PQ_cons} do not necessarily yield a Hurwitz $\hat{A}$. However, the flexibility of state transformation allows us to guarantee the stability of the reduced model conveniently. Specifically, since $A$ is Hurwitz, we can solve $A^\top X+XA=-I$ to obtain a unique symmetric positive-definite matrix $X$ with a Cholesky decomposition $X=LL^\top$. The state transformation $\Tilde{x}:=L^\top x$ of original dynamics~\eqref{eq:lti_system} yields an equivalent representation   
\begin{equation}
    \Tilde{\Sigma} :\begin{cases}
        \dot{\Tilde{x}} = \underbrace{L^\top AL^{-\top}}_{=\Tilde{A}}\Tilde{x}+\underbrace{L^\top B}_{=\Tilde{B}}u\\
        y = \underbrace{CL^{-\top}}_{=\Tilde{C}}\Tilde{x},
    \end{cases}
    \label{eq:lti_tilde}
\end{equation}
where matrix $\Tilde{A}$ is negative definite ($v^\top\Tilde{A}v<0,~\forall v\neq 0)$, though not symmetric. Applying the SP steps of Section~\ref{sec:SP} guarantees that the resulting $\hat{A}$ is negative definite, hence Hurwitz~\cite[Lemma 2]{structured_model_reduction_takayuki}.

Next, enforcing~\eqref{eq:CPconstraint} for $\Tilde{\Sigma}$ can be done by partitioning $P$ as
\begin{equation}\label{eq:partitionP}
    P=\begin{bmatrix}
        P_{\mathrm{fix}}\\ R,
    \end{bmatrix}
\end{equation}
where $P_{\mathrm{fix}}^\top$ is an orthogonal basis for $\range(\Tilde{C}^\top)$ that can be obtained via Gram-Schmidt process. The optimal model reduction task then narrows down to determining $R\in\mathbb{R}^{(r-p)\times n}$ that satisfies the orthonormality conditions
\begin{equation}\label{eq:orthogonalR}
    RR^\top=I~\text{and}~RP_{\mathrm{fix}}^\top=0.
\end{equation}
Since $P_{\mathrm{fix}}$ is known a priori, constraint~\eqref{eq:orthogonalR} can be further simplified by obtaining the orthogonal complement $V\in\mathbb{R}^{(n-p)\times n}$ of $P_{\mathrm{fix}}^\top$, i.e. $VP_{\mathrm{fix}}^\top=0$ and $VV^\top=I$, and restricting  
\begin{equation}\label{eq:RWV}
    R=WV,
\end{equation}
where $W\in\mathbb{R}^{(r-p)\times (n-p)}$ is our ultimate decision variable for model reduction.
\begin{lemma}\label{le:Worthonormal}
    Given Eq.~\eqref{eq:partitionP} and \eqref{eq:RWV}, the orthonormality requirement of~\eqref{eq:PQ_cons}, $PP^\top=I_r$ is equivalent to $WW^\top = I_{r-p}$.
\end{lemma}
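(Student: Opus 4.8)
The plan is to verify the claim by a direct block computation of $PP^\top$ using the partition~\eqref{eq:partitionP} together with the substitution~\eqref{eq:RWV}. Writing $P=[P_{\mathrm{fix}}^\top~R^\top]^\top$, we have
\begin{equation*}
PP^\top=\begin{bmatrix}
P_{\mathrm{fix}}P_{\mathrm{fix}}^\top & P_{\mathrm{fix}}R^\top\\
RP_{\mathrm{fix}}^\top & RR^\top
\end{bmatrix},
\end{equation*}
so it suffices to evaluate each of the four blocks and show the result is block diagonal with the lower block equal to $WW^\top$.

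First I would handle the top-left block: since $P_{\mathrm{fix}}^\top$ is (by construction, e.g.\ via Gram--Schmidt) an \emph{orthonormal} basis for $\range(\Tilde{C}^\top)$, we have $P_{\mathrm{fix}}P_{\mathrm{fix}}^\top=I_p$, a fact independent of $W$. Next, for the off-diagonal blocks, I substitute $R=WV$ and invoke the defining property $VP_{\mathrm{fix}}^\top=0$ of the orthogonal complement, which gives $RP_{\mathrm{fix}}^\top=W(VP_{\mathrm{fix}}^\top)=0$, and by transposition $P_{\mathrm{fix}}R^\top=0$ as well. Finally, for the bottom-right block, $RR^\top=WVV^\top W^\top=WW^\top$, where I used $VV^\top=I_{n-p}$. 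Dimensionally everything is consistent: $P_{\mathrm{fix}}\in\mathbb{R}^{p\times n}$, $R\in\mathbb{R}^{(r-p)\times n}$, $V\in\mathbb{R}^{(n-p)\times n}$, $W\in\mathbb{R}^{(r-p)\times(n-p)}$, so $RR^\top\in\mathbb{R}^{(r-p)\times(r-p)}$.

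Combining these, $PP^\top$ equals the block-diagonal matrix with blocks $I_p$ and $WW^\top$; since $r=p+(r-p)$, this matrix equals $I_r$ if and only if $WW^\top=I_{r-p}$, which is the asserted equivalence. I do not anticipate any real obstacle here — the argument is a short linear-algebra identity. The only points that require care are that $P_{\mathrm{fix}}$ is chosen \emph{orthonormal} rather than merely spanning $\range(\Tilde{C}^\top)$, and that $V$ is likewise taken orthonormal, both of which are stipulated in the text immediately preceding the lemma; if one only had orthogonal (non-normalized) bases, the top-left block and the factor $VV^\top$ would introduce extra diagonal scalings and the clean equivalence would fail.
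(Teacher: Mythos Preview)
Your argument is correct and follows essentially the same route as the paper: a direct block computation of $PP^\top$ using $P_{\mathrm{fix}}P_{\mathrm{fix}}^\top=I_p$, $VP_{\mathrm{fix}}^\top=0$, and $VV^\top=I$ to reduce the condition $PP^\top=I_r$ to $WW^\top=I_{r-p}$. Your added remark about the necessity of orthonormal (not merely spanning) choices of $P_{\mathrm{fix}}$ and $V$ is a useful clarification but does not change the method.
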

\begin{proof}
Substituting $R = WV$ in~\eqref{eq:partitionP} gives
\begin{equation}
PP^\top = 
\begin{bmatrix}
P_{fix} \\ WV
\end{bmatrix}
\begin{bmatrix}
P_{fix}^\top & V^\top W^\top
\end{bmatrix}
=
\begin{bmatrix}
I_k & 0 \\ 0 & WW^\top
\end{bmatrix},
\end{equation}
where we used $VP_{\mathrm{fix}}^\top=0$ and $VV^\top=I$.
Therefore, the orthogonality constraint $PP^\top = I_r$ is equivalent to $WW^\top = I_{r-p}$.
\end{proof}

To recapitulate, given a matrix $W$ satisfying $WW^\top=I$, one reduces $\Tilde{\Sigma}$ to uniquely determine a reduced model~\eqref{eq:lti_system_sp} where the matrices $(\hat{A}, \hat{B}, \hat{C})$ are given by~\eqref{eq:hat_matrices}, \eqref{eq:partitionP}, and~\eqref{eq:RWV}. Acknowledging that the mapping $W\mapsto(\hat{A}(W), \hat{B}(W), \hat{C}(W)$ leads to a well defined mapping $W\mapsto(\bar{A}(W), \bar{B}(W), \bar{C}(W)$ via~\eqref{eq:errordyn}, the optimal model reduction task can be formulated as
\begin{subequations}\label{eq:P1}
\begin{align}
\min_{W\in\mathbb{R}^{(r-p)\times(n-p)}}~&~\trace(\bar{B}^\top\Phi\bar{B})\qquad\qquad\textrm{(P1)}\notag\\
\textrm{s.to}~&~\eqref{eq:hat_matrices},~\eqref{eq:pi_equation},~\eqref{eq:lyapunov_observability}\\
~&~WW^\top=I,\label{seq:pg}
\end{align}
\end{subequations}
where the objective captures the error norm $\|\Sigma_{\mathrm{err}}\|_{\mcH_2}$ as per~\eqref{eq:H2def}. We solve (P1) using off-the-shelf nonlinear solvers. The implementation details and numerical results on the model-reduction accuracy obtained by implementing Algorithm~\ref{greedy_algorithm} and solving (P1) for varying values of the reduced model $r$ are provided next.

\section{Numerical Tests}\label{sec:tests}
We evaluated the proposed model reduction approaches by conducting numerical experiments on the test setup shown in Fig.~\ref{fig:testnetwork}. The test model is made available at~\cite{Sapkota-SM-GFL-Model_2025}; a brief overview is provided next.

\subsection{Test setup}
The six-bus test setup shown in Fig.~\ref{fig:testnetwork} consists of two SGs, two grid-following inverters, and two constant-current demands. The power lines are modeled as series resistance and inductance with shunt capacitance at both ends; the $RLC$ values for all lines were chosen to be identical. All dynamics were modeled in $\mathrm{dq}$ reference frame using per unit quantities~\cite{per_unit_inverter_modeling}. The model inputs included the inverter and load current references, the SG active power, and the field-circuit voltage references, yielding $u\in\mathbb{R}^{12}$. The observed output was the frequency at the two SG terminals, implying output dimension $p=2$. 

Each SG was represented by a $9^{\text{th}}$ order model that captures the swing, governor, stator current, field current, and damper winding dynamics~\cite{kundur_stability_book}. The baseline model parameters were taken from~\cite[Example 4.1]{kundur_stability_book}. While the per-unit electrical parameters of the two SGs were kept identical, the mechanical parameters (inertia, droop, and governor time constants) were distinguished to emulate heterogeneity; see~\cite{Sapkota-SM-GFL-Model_2025} for details. A $6^{\text{th}}$ order model for GFL inverters from~\cite{per_unit_inverter_modeling} was used. This model features the current and frequency dynamics governed by a current controller and a phase-locked loop. Including the $RLC$ dynamics of the power lines resulted in a total of $n=56$ states. The overall model was linearized at the equilibrium to obtain the LTI representation~\eqref{eq:lti_system}.

\subsection{Model reduction results}
For the above test system with 12 inputs, 56 states,and 2 outputs, we ran the greedy optimization-based Algorithm~\ref{greedy_algorithm} for decreasing value of the reduced model order. The requirement~\eqref{eq:CPconstraint} uniquely determines the permissible $2^{\text{nd}}$ order model and does not allow further reduction. However, we observed that Algorithm~\ref{greedy_algorithm} terminates at $r=4$ because any further reduction leads to instability. Figure~\ref{fig:results} plots the error metric $\|\Sigma_{\mathrm{err}}\|_{\mcH_2}$ for varying order of the reduced order model. While the error generally increases with decreasing model order, as expected, the relation is interestingly not strictly monotonic. For instance, the reduced $15^\text{th}$ order is more accurate than the $16^\text{th}$ order model, defying the presumption that more detailed models are more accurate. We next observed the sequence of states being reduced as Algorithm~\ref {greedy_algorithm} proceeds to lower-order models. Despite not being explicitly informed by the model's physical interpretation, the \emph{general} trend follows intuition: GFL states were among the first to be reduced, followed by network states, SG's electrical states, and finally the mechanical states of SG. However, nontrivial interleaving was also observed. For instance, some network states are retained even after all the GFL states and some electrical states have been reduced at $r=15$. 

We next solved (P1) to obtain optimally reduced models while allowing state transformations. Unlike the greedy algorithm, which sequentially reduces states, problem (P1) must be solved independently for each value of $r$. Not focusing on computational advancements, we used the MATLAB-based nonlinear optimization solver $\mathrm{fmincon}$ to solve (P1) while vectorizing the decision variable $W$, and associated constraints. We observed that the numerical complexity and solution quality of (P1) can be significantly improved when $\mathrm{fmincon}$ is initialized with the solution of the greedy optimization for the respective value of $r$. Since the greedy optimization is carried out on the original dynamics~\eqref{eq:lti_system} while (P1) is solved for the transformed dynamics~\eqref{eq:lti_tilde}, the initialization process requires the alignment steps described next. For a given $r$, let $Q_{\mathrm{opt}}$ denote the solution of Algorithm~\ref{greedy_algorithm}. Find the corresponding matrix $P_\mrg\in\mathbb{R}^{r\times n}$ satisfying $Q_{\mathrm{opt}}^\top P_\mrg=0$ and $P_\mrg P_\mrg^\top=I$. Thus, the greedy algorithm suggests retaining the states in $\range(P_{\mrg}^\top)$. Since model~\eqref{eq:lti_tilde} is obtained from~\eqref{eq:lti_system} by applying the transformation $\Tilde{x}=L^\top x$, the states retained by the greedy algorithm in the transformed state space would be $\range(L^{-1}P_{\mrg}^\top)$. Since $P_{\mathrm{fix}}$ is pre-included in $P$ to satisfy~\eqref{eq:CPconstraint} (see~\eqref{eq:partitionP}), we desire $\range([P_{\mathrm{fix}}^\top~R^\top])=\range(L^{-1}P_{\mrg}^\top)$ at initialization. Given $P_{\mathrm{fix}}, L,$ and $P_\mrg$, we can find a matrix $R$ with orthonormal rows that satisfies the range criteria. Finally, we find the initialization of (P1) as $W_{\textrm{init}}=RV^\top$, where $V$ can be found as the matrix spanning the space orthonormal to $\range(P_{\mathrm{fix}})$; see the discussion before~\eqref{eq:RWV}.

\begin{figure}[H]
    \centering
    \includegraphics[width=0.9\linewidth]{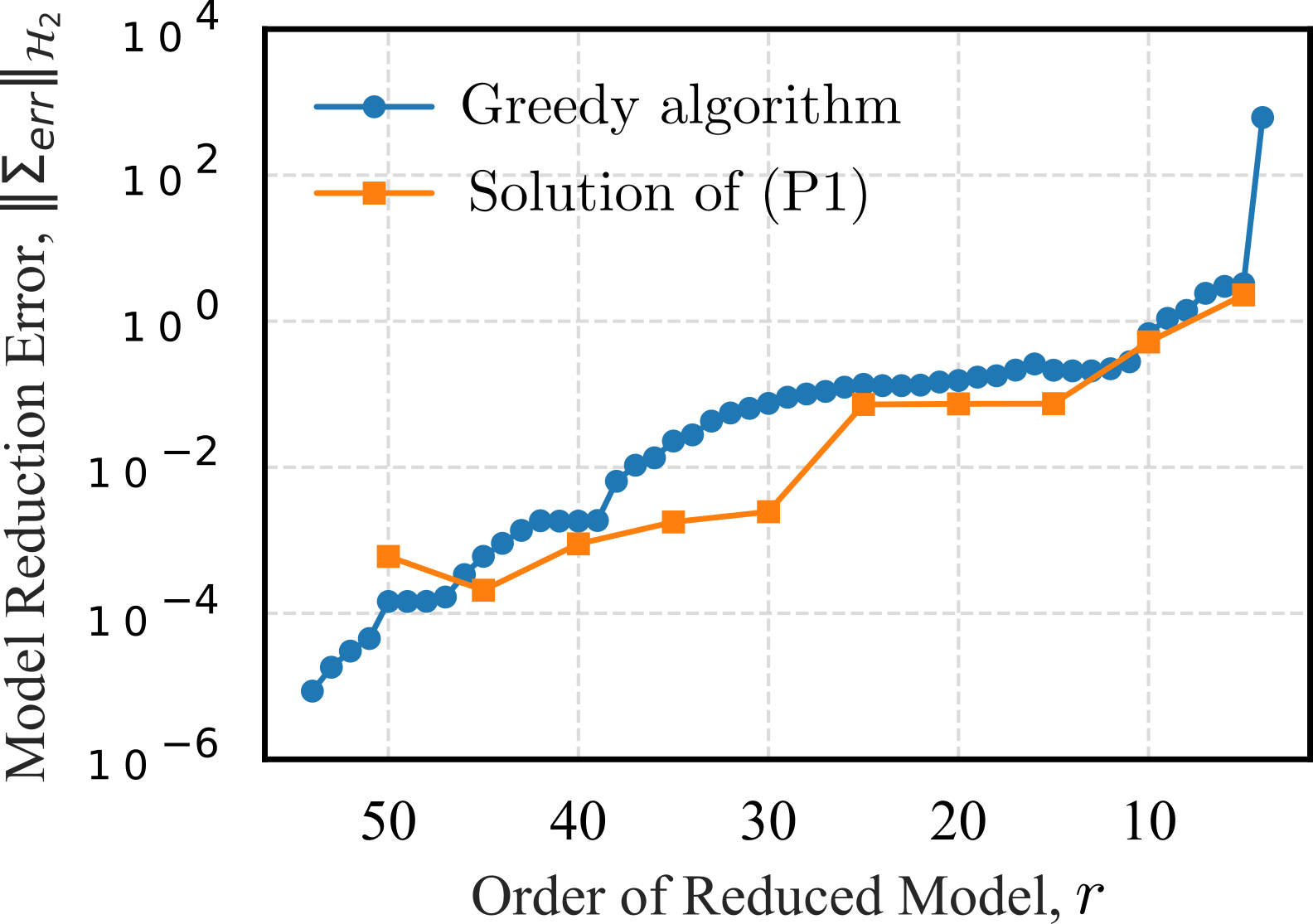}
    \caption{Approximation error $\|\Sigma_{\mathrm{err}}\|_{\mcH_2}$ attained by models reduced by greedy and nonlinear optimization methods for varying order of reduced model.}
    \label{fig:results}
\end{figure}
Figure~\ref{fig:results} shows the reduction error attained by the optimal model obtained from (P1) for $r=5,10,\dots,50$. Note that (P1) is a nonconvex problem solved by a local solver and terminated early while observing the improvements over iterates. Therefore, while for several values of $r$, the error of reduced models is lower than the result of greedy optimization by a factor of 5 to 10, the improvement is not consistent. While the numerical complexity of (P1) increases with $r$ and hence the performance could decline, Algorithm~\ref{greedy_algorithm} performs better for those instances. Therefore, the two developed approaches provide complementary value, with greedy optimization suited for reducing a few states, while nonlinear optimization is effective for aggressively reducing model order.

\section{Conclusions and Future Work}\label{sec:conclusion}
This work advances the application of singular perturbation-based power system model reduction to settings where the timescale separation of distinct state dynamics is not known a priori. After presenting a generalized framework for SP-based model reduction of LTI systems, two approaches towards optimal selection of reducible states are put forth. The first approach builds on greedy optimization techniques that sequentially identify the states that yield the minimum one-step model-reduction error. The second approach allows state transformations on the original model before applying singular perturbation, thus increasing modeling flexibility and reducing approximation error. Moreover, the second approach allows for the use of nonlinear optimization solvers that yield lower suboptimality than the greedy approach. Numerical tests of the proposed approaches demonstrate the generalizability of their application and quantify the scope of accuracy improvements arising from permitting state transformations. Improving computational tractability and adapting our nonlinear optimization-based approach to preserve desired structural properties constitute our future research.  

\bibliography{myabrv,SPreferences}
\bibliographystyle{IEEEtran}

\end{document}